\documentclass[conference]{IEEEtran}

\def\BibTeX{{\rm B\kern-.05em{\sc i\kern-.025em b}\kern-.08em
		T\kern-.1667em\lower.7ex\hbox{E}\kern-.125emX}}

\usepackage{microtype}
\usepackage{graphicx}
\usepackage{booktabs} 


\usepackage{cite}
\usepackage{amsmath,amsthm,amssymb,amsfonts}
\usepackage{algorithm}
\usepackage[noend]{algpseudocode}
\usepackage{graphicx}
\usepackage{textcomp}
\usepackage{xcolor,colortbl}
\usepackage{comment}
\usepackage{epstopdf}
\usepackage{float}
\usepackage{amssymb}
\usepackage{color}
\usepackage{relsize}
\usepackage{mathdots}
\usepackage{enumitem}
\usepackage{graphicx}
\usepackage{mathtools}
\usepackage{cleveref}
\usepackage{cases}
\usepackage{multicol}
\usepackage{cuted}


\theoremstyle{plain}

\newtheorem{lemma}{Lemma}
\newtheorem{definition}{Definition}

\newtheorem{claim}{Claim}

\theoremstyle{definition}


%
%

\begin{document}
	
	\title{Non-Uniform Windowed Decoding For Multi-Dimensional Spatially-Coupled LDPC Codes\vspace{-0.5cm}
	}
	
	\author{\IEEEauthorblockN{Lev Tauz,  Homa Esfahanizadeh, and Lara Dolecek}
		\IEEEauthorblockA{Department of Electrical and Computer Engineering, University of California, Los Angeles, USA\\
			levtauz@g.ucla.edu, hesfahanizadeh@ucla.edu, and dolecek@ee.ucla.edu 
			\vspace{-0.50cm}
		}
	}
	
	\maketitle

	\begin{abstract}
		In this paper, we propose a non-uniform windowed decoder for multi-dimensional spatially-coupled LDPC (MD-SC-LDPC) codes over the binary erasure channel. An MD-SC-LDPC code is constructed by connecting together several SC-LDPC codes into one larger code that provides major benefits over a variety of channel models. 
		In general, SC codes allow for low-latency windowed decoding. While a standard windowed decoder can be naively applied, such an approach does not fully utilize the unique structure of MD-SC-LDPC codes. In this paper, we propose and analyze a novel non-uniform decoder to provide more flexibility between latency and reliability. Our theoretical derivations and empirical results show that our non-uniform decoder greatly improves upon the standard windowed decoder in terms of design flexibility, latency, and complexity.
	\end{abstract}
	\vspace{-0.1cm}
	\section{Introduction and Motivation}
	
	Spatially-coupled LDPC (SC-LDPC) codes are a popular choice for error-correcting codes due to their capacity-achieving performance \cite{KudekarIT2011,Kudekar2012SpatiallyCE} and low-latency windowed decoding \cite{Iyengar2012}. Multi-Dimensional SC-LDPC (MD-SC-LDPC) codes are a class of LDPC codes \cite{Esfahanizadeh2019MultiDimensionalSC,Hareedy2019MinimizingTN,OhashiISIT2013,LiuCOMML2015,SchmalenISTC2014,OlmostTCOM2017,TruhachevTIT2019} created by connecting several SC-LDPC codes. This class of codes has many significant benefits compared to conventional SC codes, including lower population of detrimental objects for belief propagation (BP) decoders \cite{Esfahanizadeh2019MultiDimensionalSC,Hareedy2019MinimizingTN}, improved reliability over parallel channels \cite{SchmalenISTC2014}, and  robustness to burst erasures \cite{OhashiISIT2013}. 
	
	One major benefit of MD-SC-LDPC codes is that many of these constructions preserve the chain structure of an SC-LDPC code which allows for windowed decoding \cite{Iyengar2012,IyengarIT2013}. One way of applying windowed decoding is for each constituent SC code to have its own window and for the windows to move in unison along the coupled constituent SC chains. This approach provides the same proportional latency benefits as it does for a single SC code, relative to the block length. Conventionally, to improve latency, one can only reduce the window size uniformly across all the constituent codes at the cost of lower reliability. However, this approach does not take into account the unique structure of MD-SC-LDPC codes and, as a result, causes unnecessary reliability loss. By allowing non-uniform window sizes across the constituent codes, we exploit the structure of MD-SC codes to provide more decoder design flexibility.

	In this paper, we define a code ensemble that captures the multi-dimensional (MD) coupling structure which can be exploited for designing a flexible decoder. We study the new MD-SC-LDPC ensemble and compare it with the standard SC-LDPC ensemble in terms of finite and asymptotic properties. Next, we propose a novel non-uniform windowed decoder that takes into account the unique structure of MD-SC codes. Then, using density evolution (DE) techniques, we analyze the reliability of our new construction and provide insight into designing a non-uniform windowed decoder.
	
	Through our new decoder construction and utilization of MD-SC properties, we demonstrate a large improvement over uniform windowed decoding. For example, we demonstrate that by increasing the decoder latency by a small amount, we can decrease the average number of iterations per window by $35\%$. Additionally, we show that our decoder achieves threshold closer to the optimal decoder threshold compared to uniform windowed decoding. As such, our decoder can reliably operate at higher channel erasure probabilities for the same decoding complexity and latency. While we demonstrate the efficacy of our decoder on our new ensemble, our non-uniform decoder is also beneficial for other MD-SC codes \cite{OhashiISIT2013,LiuCOMML2015} as it exploits the unique coupling principles of MD-SC codes.
	
	We define some necessary notations. For positive integers $A$ and $B$, we define the set $[A] \triangleq \{0,1,\dots,A-1\}$ and the operation $(n)_B = n \mod B$. For a node $v$ in a graph, $N(v)$ is the set of neighboring nodes of $v$. Additionally, given two vectors $x$ and $y$, we define $x \preceq y$ to be an element-wise inequality such that $x_i \leq y_i$ for all $i$. Also, we let $\mathbb{Z}$ be the set of all integers. We define the operator $\mathcal{O}(g(x))$ as the standard notion of complexity for function $g(x)$.
	\vspace{-0.2cm}
	\section{Preliminaries}
	\subsection{Definition: $\mathcal{C}(d_l,d_r,L_1,\gamma_1,L_2,\gamma_2,\mathcal{T})$ Ensemble } 
	
	\begin{figure}
		\centering
		\includegraphics[scale=0.45]{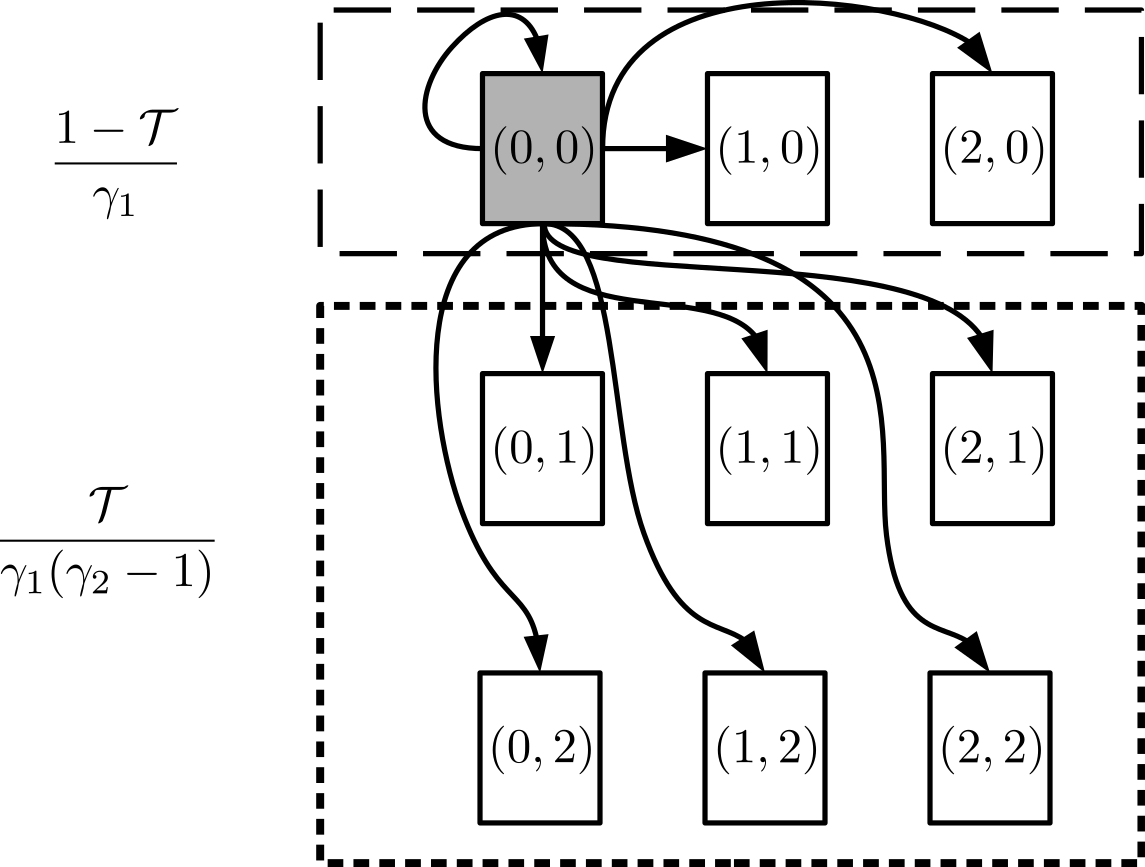}
		\caption{ \footnotesize Coupling from the VN perspective of section $(0,0)$ with $\gamma_1=\gamma_2=3$. The probability of connecting to a section in the top box is $\frac{1-\mathcal{T}}{\gamma_1}$ and of connecting to a section in the bottom box is $\frac{\mathcal{T}}{\gamma_1(\gamma_2-1)}$.\vspace{-0.5cm}}
		\label{fig_coupling}
	\end{figure}
	
	\begin{figure*}[ht]
		\small
		\begin{equation}\label{eq_bp_forward_de}
		\begin{split}
		y^{(l+1)}_{(i,j)} &= 1-(1-\frac{1-\mathcal{T}}{\gamma_1}\sum_{k=0}^{\gamma_1-1} x^{(l)}_{(i-k,j)}-\frac{\mathcal{T}}{\gamma_1(\gamma_2-1)}\sum_{k=0}^{\gamma_1-1}\sum_{r=1}^{\gamma_2-1} x^{(l)}_{(i-k,(j-r)_{L_2})})^{d_r-1} \\
		x^{(l+1)}_{(i,j)} &= 
		\epsilon(\frac{1-\mathcal{T}}{\gamma_1}\sum_{k=0}^{\gamma_1-1}y^{(l+1)}_{(i+k,j)} + \frac{\mathcal{T}}{\gamma_1(\gamma_2-1)}\sum_{k=0}^{\gamma_1-1}\sum_{r=1}^{\gamma_2-1}y^{(l+1)}_{(i+k,(j+r)_{L_2})})^{d_l-1}
		\end{split}
		\end{equation}
		\vspace{-0.7cm}
	\end{figure*}
	
	In this section, we define an MD-SC code ensemble  $\mathcal{C}_{\text{MD}}=\mathcal{C}(d_l,d_r,L_1,\gamma_1,L_2,\gamma_2,\mathcal{T})$. The parameters $d_l$ and $d_r$ denote the degrees of the variable nodes (VNs) and check nodes (CNs), respectively. We denote $L_1$ as the \textit{1-dimensional (1D)  coupling length}  and $L_2$ as the \textit{MD coupling length}. Additionally, we define $1 \leq \gamma_1 \leq L_1$ as the \textit{1D coupling depth} and $1 \leq \gamma_2\leq L_2$ as the \textit{MD coupling depth}, which specify the coupling distance along a dimension. We refer to $0\leq \mathcal{T} \leq 1$ as the \textit{density} of the edges for the coupling along the second dimension. We note that for $\mathcal{T} = 0$, this ensemble would degenerate into $L_2$ uncoupled SC-LDPC codes. 
	
	First, we define the building blocks of our construction. We denote $M$ as the \textit{section size}. A \textit{section} is a collection of $M$ VNs and $M(d_l/d_r)$ CNs and is represented by a tuple $(i,j) \in \mathbb{Z}^2$. The code is constructed by extracting only the VNs in sections $[(L_1,L_2)] \triangleq  \{(j,k): j \in [L_1], k\in [L_2]\} \subset \mathbb{Z}^2$. CNs that are not connected to the VNs in $[(L_1,L_2)]$ after coupling are purged from the code. We will describe the coupling shortly. The VNs and CNs in section $(i,j)$ make up the $i^{th}$ position of the $j^{th}$ segment of the overall MD-SC code. For convenience, we denote $(i,\cdot)$ as the $i^{th}$ position of the code and $(\cdot,j)$ as the $j^{th}$ segment of the code.

	Now, we describe coupling of the sections. For each of the $d_l$ edges incident to a VN in section $(i,j)$, we flip a biased coin with probability of heads being $\mathcal{T}$. If it is tails, we choose a section uniformly and independently from $\{(i+k,j): \; k \in [\gamma_1]\}$, \textcolor{black}{ and if it is heads}, we choose a section uniformly and independently from  $\{(i+k,(j+r)_{L_2}): \; k \in [\gamma_1], r\in [\gamma_2]\setminus\{0\}\}$. After choosing a section to connect to, a CN is picked uniformly at random from the $M(d_l/d_r)$ CNs in that section to connect the edge to. This coupling can also be viewed from the CN perspective. In other words, each of the $d_r$ edges of a CN in section $(i,j)$ is uniformly connected to a VN from sections $\{(i-k,j): \; k \in [\gamma_1]\}$ with probability $1-\mathcal{T}$ or sections $\{(i-k,(j-r)_{L_2}): \; k \in [\gamma_1], r\in [\gamma_2]\setminus\{0\}\}$ with probability $\mathcal{T}$. Coupling is performed such that no parallel edges are formed. An example of this coupling is in Fig.~\ref{fig_coupling}. \footnote{\textcolor{black}{Note that the  first dimension is terminated and the second dimension is circularly coupled, as it is also used in previous works  \cite{Esfahanizadeh2019MultiDimensionalSC,Hareedy2019MinimizingTN}. For our work, we utilize this coupling to restrict the direction of the decoding wave to simplify our decoder analysis.}}
	
	\vspace{-0.0cm}
	\subsection{Density Evolution}
	We analyze the performance of our ensemble for the binary erasure channel with erasure probability $\epsilon$ (BEC($\epsilon$)) under BP decoding. By taking $M\to \infty$, we use DE techniques \cite{RichardsonModernCodingTheory,KudekarIT2011} to define BP DE equations to analyze  our ensemble.
	
	Let $x_{(i,j)}$ and $y_{(i,j)}$ denote the erasure probability of an outgoing message from a VN and CN in section $(i,j)$, respectively. We define $\underline{x} = \{x_{(i,j)}\}$ as the constellation of VN erasure probabilities. We initialize the constellation with $x^{(0)}_{(i,j)} = 1$ for $(i,j) \in [(L_1,L_2)]$ and $0$ otherwise. 
	According to our construction method, the resulting BP DE equations are given in \cref{eq_bp_forward_de}. For this work, we employ a \textit{flooding schedule} where all CN messages are updated before updating the VN messages, and vice versa.
	
	For convenience, we write the \textit{update} purely in terms of $x_{(i,j)}$ as 
	$	x^{(l+1)}_{(i,j)} = f(\{x^{(l)}_{(i\pm k ,(j \pm r)_{L_2})}: \; (k,r) \in [(\gamma_1,\gamma_2)] \}) 
	$
	for $(i,j) \in [(L_1,L_2)]$. It can be verified that $f(\cdot)$ is monotonic in each of its arguments. Given a target erasure probability $\delta$, we define the BP threshold as $\epsilon^*_{\delta}$ such that for $\epsilon \leq \epsilon^{*}_{\delta}$ BP is able to decode all bits to at most a target erasure probability $\delta$ after an infinite number of iterations.
	
	\vspace{-0.15cm}
	\section{MD-SC-LDPC Ensemble Analysis}
	
	Before describing our decoder, it is important to understand the features resulting from the MD-SC code structure. 
	The following lemma shows that the $\mathcal{C}(d_l,d_r,L_1,\gamma_1,L_2,\gamma_2,\mathcal{T})$ ensemble has the same asymptotic properties (design rate and BP threshold) as the standard 1D-SC code ensemble $\mathcal{C}_{\text{1D}} = \mathcal{C}(d_l,d_r,L_1,\gamma_1)$ (see \cite{KudekarIT2011} for full description of $\mathcal{C}_{\text{1D}}$). A similar lemma was introduced for the MD-SC ensemble defined in \cite{OhashiISIT2013}, and we extend the concept for our new ensemble definition that incorporates the coupling density.
	\begin{lemma}\label{Lemma_eq}
		Let $\epsilon^{*}_{\delta}(\mathcal{C})$ and $R(\mathcal{C})$  refer to the BP threshold and design rate of a code ensemble C, respectively. Then,
		\begin{align}
		\epsilon^{*}_{\delta}(\mathcal{C}(d_l,d_r,L_1,\gamma_1,L_2,\gamma_2,\mathcal{T})) = \epsilon^{*}_{\delta}(\mathcal{C}(d_l,d_r,L_1,\gamma_1)) \label{bp_equal}\\
		R(\mathcal{C}(d_l,d_r,L_1,\gamma_1,L_2,\gamma_2,\mathcal{T})) = R(\mathcal{C}(d_l,d_r,L_1,\gamma_1)). \label{rate_equal}
		\end{align}
	\end{lemma}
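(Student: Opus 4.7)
The plan is to dispatch \eqref{rate_equal} by a direct node count and to prove \eqref{bp_equal} by exploiting the circular symmetry of the second dimension to collapse the MD density-evolution (DE) recursion onto the standard 1D one.

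First, for the rate, I would count variable and check nodes. There are $L_1 L_2 M$ retained VNs. Because the second dimension is circularly coupled, every segment contributes CNs in the same first-dimension positions as in the 1D ensemble, namely $i \in [L_1 + \gamma_1 - 1]$, yielding $(L_1 + \gamma_1 - 1) L_2 M (d_l/d_r)$ CNs in the large-$M$ limit. The factor $L_2$ cancels in the ratio and gives the same design rate $1 - (d_l/d_r)(L_1+\gamma_1-1)/L_1$ as the standard 1D-SC ensemble \cite{KudekarIT2011}, establishing \eqref{rate_equal}.

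For \eqref{bp_equal}, the key observation is that the initialization $x^{(0)}_{(i,j)} = 1$ for $(i,j) \in [(L_1,L_2)]$ is uniform in the segment index $j$. I would show by induction on $l$ that this per-segment uniformity is preserved by \cref{eq_bp_forward_de}. Assuming $x^{(l)}_{(i,j)} = x^{(l)}_i$ is independent of $j$, the sum $\sum_{r=1}^{\gamma_2-1} x^{(l)}_{(i-k,(j-r)_{L_2})}$ contributes exactly $(\gamma_2-1)\, x^{(l)}_{i-k}$, so the two coupling weights collapse via $\frac{1-\mathcal{T}}{\gamma_1} + \frac{\mathcal{T}(\gamma_2-1)}{\gamma_1(\gamma_2-1)} = \frac{1}{\gamma_1}$, and the CN update becomes $y^{(l+1)}_{(i,j)} = 1 - \bigl(1 - \tfrac{1}{\gamma_1}\sum_{k=0}^{\gamma_1-1} x^{(l)}_{i-k}\bigr)^{d_r-1}$, which is independent of $j$ and coincides with the 1D-SC CN update; an identical argument applies to the VN update. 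Consequently the MD DE trajectory is just the 1D-SC trajectory replicated across the $L_2$ segments, so the condition $x^{(l)}_{(i,j)} \le \delta$ for all $(i,j)$ as $l\to\infty$ holds if and only if the analogous condition holds for $\mathcal{C}_{\text{1D}}$, giving \eqref{bp_equal}.

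The rate identity is essentially bookkeeping. The only delicate step is the inductive symmetry argument: I expect the hardest part to be verifying that the boundary conventions (termination in the first dimension and the zero-padding $x^{(l)}_{(i,j)}=0$ for $(i,j)\notin[(L_1,L_2)]$) do not break uniformity across $j$. Because the termination affects only the first coordinate while the cyclic shift $(j\pm r)_{L_2}$ merely permutes the $L_2$ segments, the zero-padded entries are themselves uniform in $j$ and the induction goes through cleanly, with no appeal to monotonicity of $f$ required.
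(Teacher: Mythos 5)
Your proof of \eqref{bp_equal} is essentially identical to the paper's: the same induction on $l$ showing that the constellation stays uniform in the segment index $j$, with the two coupling weights collapsing via $\frac{1-\mathcal{T}}{\gamma_1}+\frac{\mathcal{T}(\gamma_2-1)}{\gamma_1(\gamma_2-1)}=\frac{1}{\gamma_1}$ so that the MD update reduces to the 1D-SC update. Your extra remark that the zero-padding outside $[(L_1,L_2)]$ is itself uniform in $j$ (since the boundary condition constrains only the first coordinate) is correct and harmless; the paper does not belabor it. This half is fine.

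The gap is in \eqref{rate_equal}. You count $(L_1+\gamma_1-1)L_2M(d_l/d_r)$ check nodes and conclude the design rate is $1-(d_l/d_r)(L_1+\gamma_1-1)/L_1$, but the ensemble (both here and in the 1D case of \cite{KudekarIT2011}) \emph{purges} check nodes that end up with no neighbors in $[(L_1,L_2)]$, and at the boundary positions a constant (not vanishing in $M$) fraction of CNs is purged: a CN in section $(i,j)$ with $i\in[\gamma_1]$ is fully disconnected with probability $\bigl(\tfrac{\gamma_1-1-i}{\gamma_1}\bigr)^{d_r}$. The correct common rate therefore carries the correction term $2\sum_{i=0}^{\gamma_1-1}(i/\gamma_1)^{d_r}$, so as written your argument establishes equality to the wrong quantity. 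The repair is short and is exactly the observation the paper makes: whether an edge of a CN points to a nonexistent section depends only on the first coordinate $i-k$, not on the segment or on $\mathcal{T},\gamma_2$, so the expected number of purged CNs per segment is identical to the 1D count; the factor $L_2$ then cancels as you intended. You should add that step rather than assert the uncorrected CN count.
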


		\begin{proof}
			To prove equivalency of the BP threshold, we show that  $x^{(l)}_{(i,j)}= \bar{x}^{(l)}_{i}$ for $l\geq 0$ and $ j\in [L_2]$ where $\bar{x}^{(l)}_{i}$ is the BP DE for $\mathcal{C}(d_l,d_r,L,\gamma)$. Therefore, the limits of these erasure probabilities will be the same which guarantees the same threshold.
			
			We prove this claim by induction. Clearly, $x^{(0)}_{(i,j)}= \bar{x}^{(0)}_{i} = 1$. Now, assume $x^{(l)}_{(i,j)}= \bar{x}^{(l)}_{i}$ holds true. Then,
			\begin{equation*}
			\begin{split}
			y^{(l+1)}_{(i,j)} &=  1-(1-\frac{1-\mathcal{T}}{\gamma_1}\sum_{k=0}^{\gamma_1-1} x^{(l)}_{(i-k,j)}\\
			&\quad -\frac{\mathcal{T}}{\gamma_1(\gamma_2-1)}\sum_{k=0}^{\gamma_1-1}\sum_{r=1}^{\gamma_2-1} x^{(l)}_{(i-k,(j-r)_{L_2})})^{d_r-1} \\
			&= 1-(1-\frac{1}{\gamma_1}\sum_{k\in[\gamma_1]} \bar{x}^{(l)}_{i-k})^{d_r-1} = \bar{y}^{(l+1)}_{i} \\
			\end{split}
			\end{equation*}
			\begin{equation*}
			\begin{split}
			x^{(l+1)}_{(i,j)}  &= 
			\epsilon(\frac{1-\mathcal{T}}{\gamma_1}\sum_{k=0}^{\gamma_1-1}y^{(l+1)}_{(i+k,j)} \\
			&\quad + \frac{\mathcal{T}}{\gamma_1(\gamma_2-1)}\sum_{k=0}^{\gamma_1-1}\sum_{r=1}^{\gamma_2-1}y^{(l+1)}_{(i+k,(j+r)_{L_2})})^{d_l-1}\\
			&= \epsilon(\frac{1}{\gamma_1}\sum_{k\in[\gamma_1]}\bar{y}^{(l+1)}_{i+k})^{d_l-1} =  \bar{x}^{(l+1)}_{i},
			\end{split}
			\end{equation*}
			for $(i,j) \in [(L_1,L_2)]$ where $\bar{y}^{(l+1)}_{i}$ is the check-to-variable messages for $C_{\text{1D}}$ . Thus, $x^{(l)}_{(i,j)}= \bar{x}^{(l)}_{i}$ for $l\geq 0$ which proves (\ref{bp_equal}).
			
			To prove (\ref{rate_equal}), we observe that the rate-loss is due to the coupling along the first dimension since the coupling along the second dimension wraps around. As such, the expected number of disconnected CNs in section $(i,j)$ for $i\in [\gamma_1]$ is $M\frac{d_l}{d_r}(\frac{\gamma_1 -1 -i}{\gamma_1})^{d_r}$. By symmetry, CNs in sections $(i,j)$ for $i\in \{L_1+k: k \in [\gamma_1]\}$ have the same expected number of disconnected CNs. Finally, CNs for sections $(i,j)$ for $i\in \{\gamma_1,\gamma_1+1,\dots,L_1-1\}$ have zero expected disconnected CNs. Hence, we get 
			\begin{align*}
			&R(C(d_l,d_r,L_1,\gamma_1,L_2,\gamma_2)) \\
			&= 1-\frac{(ML_2\frac{d_l}{d_r})(L_1+\gamma_1-1 - 2\sum^{\gamma_1-1}_{i=0}(\frac{i}{\gamma_1})^{d_r})}{ML_2L_1} \\
			&= 1 - \frac{d_l}{d_r}(1+\frac{\gamma_1-1 - 2\sum^{\gamma_1-1}_{i=0}(\frac{i}{\gamma_1})^{d_r}}{L_1})
			\end{align*}
			which is the rate for $C(d_l,d_r,L_1,\gamma_1)$.
		\end{proof}
	
	While our MD-SC ensemble and the standard SC ensemble have exactly the same asymptotic properties, \textcolor{black}{they may differ in their finite-length performances.} To demonstrate the differences, we analyze the occurrence probability of a size-$2$ stopping set for VNs within a section. A size-$k$ stopping set is a subset of $k$ VNs where all neighboring CNs of this subset connect to the subset at least twice \cite{RichardsonModernCodingTheory}. If all VNs in a stopping set are erased, the BP decoder fails to decode this set of VNs.
	
	To see the effect of $\gamma_2$ and $\mathcal{T}$, we calculate the probability of a size-$2$ stopping set occurring for two VNs in the same section. This probability acts as a rough upper bound on the probability of size-$2$ stopping sets for any pair of VNs. The following lemma is inspired by \cite{Kudekar2015WaveLikeSO} where their analysis is performed for 1D-SC codes.
	

	\begin{lemma}\label{lemma_2}
	Assume $\gamma_1M\frac{d_l}{d_r} > d_l$. Consider the ensemble $C(d_l,d_r,L_1,\gamma_1,L_2,\gamma_2,\mathcal{T})$. Given two VNs in the same section, the probability that they form a stopping set is 
	\textcolor{black}{
	\begin{equation}\label{eq_stop_p_stop}
	P_{stop} = \sum_{a,b \geq 0 : a+b = d_l}\frac{(1-\mathcal{T})^{2a}\mathcal{T}^{2b}\binom{d_l}{a}^2(1-\frac{1}{d_r})^{d_l}}{ \sum_{l=0}^{a}\sum_{k=0}^{b}\mathcal{K}^{a,b}_{l,k}(1-\frac{1}{d_r})^{l+k}},
	\end{equation}}
	where 
	\begin{equation}\label{eq_stop_k_ab}
	\small
	\mathcal{K}^{a,b}_{l,k}= \binom{a}{l}\binom{b}{k}\binom{\gamma_1M\frac{d_l}{d_r}-a}{a-l}\binom{\gamma_1(\gamma_2-1)M\frac{d_l}{d_r}-b}{b - k}.
	\end{equation}
	\end{lemma}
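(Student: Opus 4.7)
The plan is to compute $P_{stop}$ by working in the socket model for the ensemble and then conditioning on the edge-type split of each of the two VNs. In the socket model, every VN has $d_l$ sockets and every CN has $d_r$ sockets; a code in $\mathcal{C}_{\text{MD}}$ corresponds to a uniformly random matching of VN to CN sockets that respects the 1D/MD coupling rule, conditioned on no two sockets of a single VN being matched to sockets of a single CN. Two VNs in the same section form a size-$2$ stopping set if and only if their CN-neighbor sets coincide, so $P_{stop}$ is exactly the probability of this equality event.

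First, I would condition on how each VN's $d_l$ edges split between tails (1D) and heads (MD). Because the 1D and MD CN-socket pools are disjoint, equality of the neighbor sets forces both VNs to use the same number $a$ of tails and $b=d_l-a$ of heads edges; this event has probability $\binom{d_l}{a}^{2}(1-\mathcal{T})^{2a}\mathcal{T}^{2b}$ and becomes the outer sum over $a+b=d_l$. Given the splits, the 1D and MD matchings are independent, so by symmetry I can fix any valid configuration of VN1 and reduce to computing, in each region separately, the conditional probability that VN2's neighbor set matches VN1's given that VN2 has no parallel edges.

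The core step is a combinatorial count of valid VN2 configurations in a single region, parameterized by the number $l$ of CNs shared with VN1. In the 1D region there are $\binom{a}{l}(d_r-1)^{l}\binom{N_1-a}{a-l}d_r^{a-l}$ valid configurations with $l$ shared CNs, where $N_1=\gamma_1Md_l/d_r$: choose which of VN1's CNs to reuse, pick an alternative socket on each (since VN1 already occupies one of the $d_r$), then pick $a-l$ fresh CNs and any socket on each. The stopping sub-event in this region is exactly $l=a$, contributing $(d_r-1)^{a}$ configurations, and the analogous count with $N_2=\gamma_1(\gamma_2-1)Md_l/d_r$ handles the MD region. Taking the ratio, rescaling numerator and denominator by $d_r^{a}$ and $d_r^{b}$ respectively, pulling the combined factor $(1-1/d_r)^{a+b}=(1-1/d_r)^{d_l}$ into the numerator, and collapsing the product of the two denominator sums into a double sum over $(l,k)$ reproduces $\mathcal{K}^{a,b}_{l,k}$ and the stated closed form.

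The main subtlety I expect to have to argue carefully is that, given VN1's matching and conditioned on VN2 having no parallel edges, the marginal distribution of VN2's matching is uniform over the valid configurations counted above, so that the ratio of counts equals the desired conditional probability. The hypothesis $\gamma_1 M d_l/d_r>d_l$ enforces $N_1>d_l$, ensuring that none of the binomial coefficients above degenerates and that VN1 and VN2 can each host up to $d_l$ distinct 1D CN neighbors when $a=d_l$.
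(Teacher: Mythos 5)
Your proposal is correct and follows essentially the same route as the paper: condition on the tails/heads split of each VN's $d_l$ edges to get the outer factor $\binom{d_l}{a}^2(1-\mathcal{T})^{2a}\mathcal{T}^{2b}$, fix $N(v_1)$ by symmetry, and compute the conditional probability in each region as a ratio of counts of $v_2$'s configurations parameterized by the number of shared CNs (your counts match the paper's $K^{a}_{stop,S_0}$, $K^{a}_{S_0}$, etc., up to the $a!$ permutation factor, which cancels in the ratio). The uniformity subtlety you flag is the one step the paper also asserts rather than argues, so nothing is missing relative to the published proof.
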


		\begin{proof}
			Consider two VNs $v_1$ and $v_2$ in section $(i,j)$. To form a stopping set, all of their edges must connect to the same set of CNs. First, we note that the VNs connect to CNs in positions $\{(i+k,j): \; k \in [\gamma_1]\}$ with probability $1-\mathcal{T}$ and to CNs in positions $\{(i+k,(j+r)_{L_2}: \; k \in [\gamma_1],  r\in [1,2,\dots,\gamma_2-1]\}$ with probability $\mathcal{T}$. We denote the first set as $S_0$ and the second set as $S_1$. Out of the $d_l$ edges of node $v_1$, the probability that $a$ edges connect to $S_0$ and $b$ edges connect to $S_1$ is $(1-\mathcal{T})^{a}\mathcal{T}^{b}\binom{d_l}{a}$. To make a stopping set, $v_2$ must also have $a$ edges connected to $S_0$ and $b$ edges connected to $S_1$. Thus,
			\begin{equation*}
			P_{stop} = \sum_{a,b \geq 0 : a+b = d_l}(1-\mathcal{T})^{2a}\mathcal{T}^{2b}\binom{d_l}{a}^2\mathcal{P}_{a,b}
			\end{equation*}
			where $\mathcal{P}_{a,b}$ is the conditional probability of $\{v_1,v_2\}$ being a stopping set given that both VNs have $a$ edges connected to $S_0$ and $b$ edges connected to $S_1$.
			
			To find $\mathcal{P}_{a,b}$, it is necessary to find the probability that $v_1$ and $v_2$ connect to the same CNs in $S_0$ and $S_1$. We remind that an edge of a VN node $v_1$ or $v_2$ is equally likely to connect to any CN within the sets. As such, we can use a counting argument to calculate the conditional probability of a stopping set within that subset of sections.  Recall that there are no parallel edges.  Each CN has $d_r$ sockets for an edge to be connected to. We can fix $N(v_1)$ since all the subsets of CNs that $v_1$ can be connected to is equally likely. Let $K^{c}_{stop,S}$ be the number of choices for set $S$ where $v_1$ and $v_2$ connect to the same CNs with $c$ edges and let $K^{c}_{S}$ be the total number of choices for $S$ with $c$ edges from each VN. Then, we have $\mathcal{P}_{a,b} =  \frac{K^{a}_{stop,S_0}}{K^{a}_{S_0}} \cdot \frac{K^{b}_{stop,S_1}}{K^{b}_{S_1}} $.  
			
			Let us consider the number of choices for $S_0$. Thus, 
			\begin{equation}\label{eq_stop_1}
			K^{a}_{stop,S_0} = a!(d_r-1)^{a}
			\end{equation}
			where $a!$ is due to the permutation of edges and  $(d_r-1)^{a}$ is the number of different ways of connecting the edges of $v_2$ to the free $d_r-1$ sockets of $N(v_1)$. To calculate $K^{a}_{S_0}$, we note that in general $v_1$ and $v_2$ can have $l$ common CN neighbor in $S_0$ with $0\leq l \leq a$. There are $\binom{a}{l}(d_r-1)^{l}$ socket selections for the $l$ common CNs. On the other hand, there are $\binom{\gamma_1M\frac{d_l}{d_r}-a}{a-l}(d_r)^{a-l}$ socket selections for the other CNs. Thus,
			\begin{equation}\label{eq_stop_2}
			K^{a}_{S_0} = a!\sum_{l=0}^{a}\binom{a}{l}\binom{\gamma_1M\frac{d_l}{d_r}-a}{a-l}(d_r-1)^{l}(d_r)^{a-l}.
			\end{equation}
			
			We can get similar results for $S_1$ except that the total number of CNs in $S_1$ is $\gamma_1(\gamma_2-1)M\frac{d_l}{d_r}$. As such,
			\begin{equation}\label{eq_stop_3}
			K^{b}_{stop,S_1} = b!(d_r-1)^{b} 
			\end{equation}
			and
			\begin{equation}\label{eq_stop_4}
			K^{b}_{S_1} = b!\sum_{k=0}^{b}\binom{b}{k}\binom{\gamma_1(\gamma_2-1)M\frac{d_l}{d_r}-b}{b-k}(d_r-1)^{k}(d_r)^{b-k}.
			\end{equation}
			
		\textcolor{black}{Combining \Crefrange{eq_stop_1}{eq_stop_4} into $\mathcal{P}_{a,b}$, we get the simplified \cref{eq_stop_p_stop} and \cref{eq_stop_k_ab}.}

		\end{proof}
	
	\begin{figure}[t]
		\centering
		\includegraphics[height = 5.5cm]{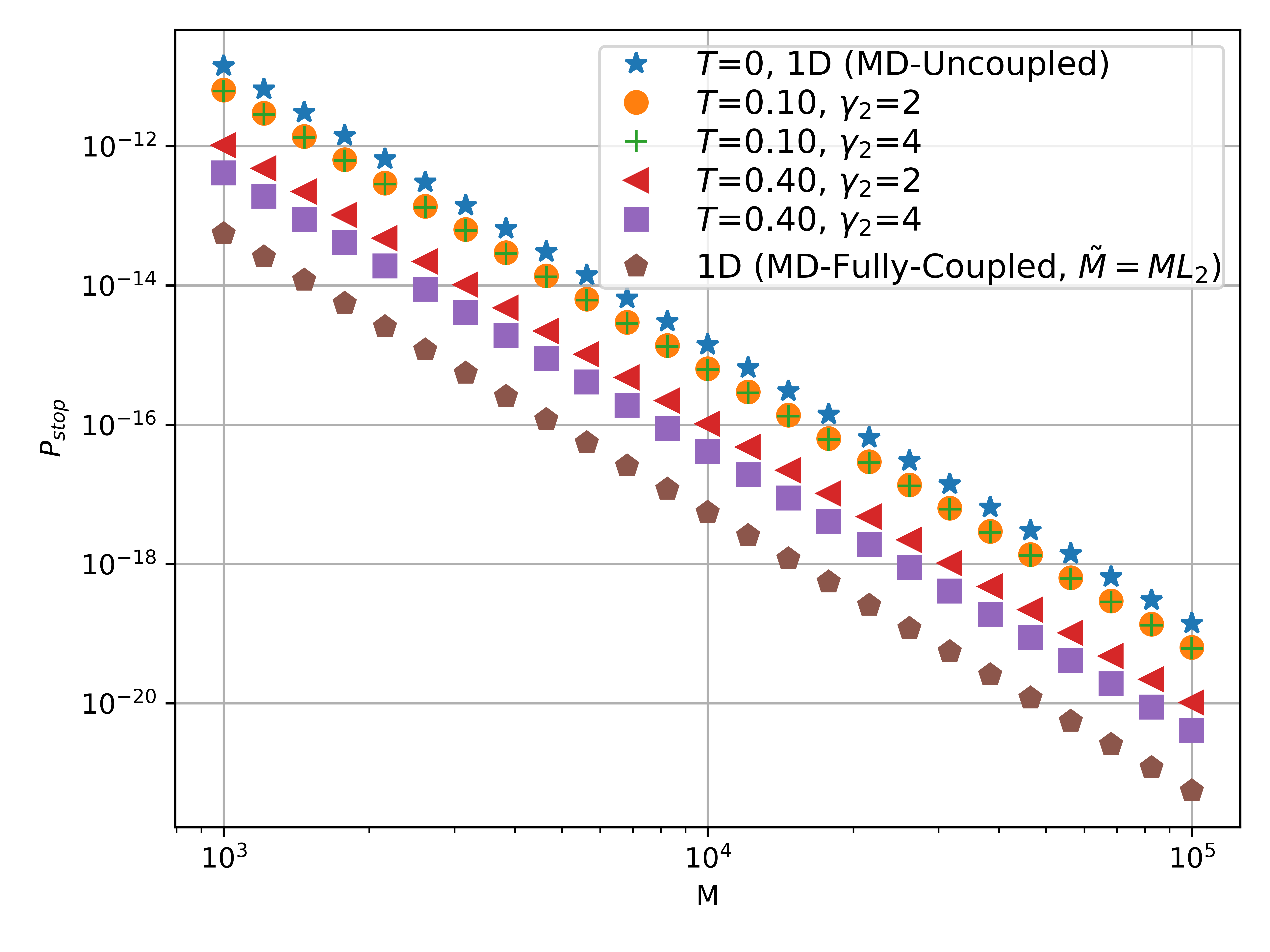}
		\vspace{-0.3cm}
		\caption{ \footnotesize \textcolor{black}{ We compare $P_{stop}$ for 1D and MD codes versus $M$ and for code parameters $d_l=4,d_r=8,L_2 = 3$, and $\gamma_1=2$. For the 1D codes, the MD-Uncoupled case is when $\mathcal{T} =0$, i.e., there are $L_2$ 1D coupled codes, and the MD-Fully-Coupled case is when $\gamma_2=L_2$ and $\mathcal{T} = \frac{\gamma_2-1}{\gamma_1}$ which  is equivalent to a 1D code with section size $\tilde{M} = ML_2$. \vspace{-0.5cm}}}
		\label{fig_stop}
		
	\end{figure}
	
	Fig.~\ref{fig_stop} compares $P_{stop}$ for relevant SC codes \cite{Aref2016FiniteLengthAO} and various MD-SC codes. We observe that increasing $\mathcal{T}$ and/or $\gamma_2$ results in a decrease for $P_{stop}$ and $\gamma_2$ is less influential in this regard for small values of $\mathcal{T}$. While this analysis is fairly coarse, it suggests that the finite-length performance of the MD-SC ensemble is improved by increasing $\gamma_2$ and $\mathcal{T}$. However, in subsequent sections, we demonstrate the benefit of small $\gamma_2$ and $\mathcal{T}$ for the windowed decoding threshold. Thus, there is a trade-off between finite-length and asymptotic performance that need to be considered in the design. The rest of the paper will focus purely on the asymptotic characteristics of designing a windowed decoder. 
	
	
	\section{Non-Uniform Windowed Decoding}
	In this section, we describe a general approach to perform non-uniform windowed decoding  on the $C(d_l,d_r,L_1,\gamma_1,L_2,\gamma_2,\mathcal{T})$ ensemble. We define a subset of VNs for which BP will be performed over as a \textit{window configuration (WC)}. Every WC has a unique set of VNs that are aimed to be decoded, called the \textit{targeted VNs (TVNs)}. The TVNs of each WC are VNs of a single section $(i,j)$ of the code. We denote $\underline{x}_{\{t\}}$ as the global constellation after $t$ WCs have been processed. The initial constellation $\underline{x}_{\{0\}}$ is set to $x_{(i,j),\{0\}} = 1$ for $(i,j) \in [(L_1,L_2)]$ and $0$ otherwise. 
	
	Assume the VNs of section $(i_t,j_t)$ are the TVNs after $t$ WCs are processed. We denote $\mathcal{W} = [W_0,W_1,\dots,W_{L_2-1}]$ to be the \textit{vector of window sizes} of the WCs. Given $(i_t,j_t)$, we define $S^{\mathcal{W}}_{(i_t,j_t)} = \{(i_t+k,j_t+r): \; r \in [L_2], k \in [W_{(j_t+r)_{L_2}}]\}$ as the WC over which BP will be performed. For any specific WC, the window sizes are cyclically shifted so that $W_0$ is centered on the TVNs. 
	
	We define $\underline{z}$ as the window constellation.  We initialize $\underline{z}^{(0)}$ by the current global constellation, i.e., $\underline{z}^{(0)} = \underline{x}_{\{t\}} $. We then update $\underline{z}^{(l)}$ by
	\begin{equation} \label{eq_windowed_update}
	z^{(l+1)}_{(i,j)} = \begin{dcases}
	z^{(l)}_{(i,j)} \text{, if } (i,j) \notin S^{\mathcal{W}}_{(i_t,j_t)} \\
	f(\{z^{(l)}_{(i\pm k ,(j \pm r)_{L_2})}: \; (k,r) \in [(\gamma_1,\gamma_2)] \}) \text{, else}
	\end{dcases}
	\end{equation}
	for $l \in [I_{(i_t,j_t)}]$, where $I_{(i_t,j_t)}$ is the maximum number of iterations and is chosen such that $z^{I_{(i_t,j_t)}}_{(i_t,j_t)} \leq \delta$ for the \textit{target erasure probability} $\delta$ \footnote{\textcolor{black}{A \textit{finite} number of needed iterations is achievable when $\epsilon$ is sufficiently smaller than the threshold $ \epsilon^{*}_{\delta}$. }}.  After $I_{(i_t,j_t)}$ iterations, the global constellation is updated by
	\begin{equation*}
	x_{(i,j),\{t+1\}} =  \begin{dcases*}
	z^{I_{(i_t,j_t)}}_{(i,j)} & if $ (i,j) = (i_t,j_t)$ \\
	x_{(i,j),\{t\}} & if $(i,j) \neq (i_t,j_t)$. \\
	\end{dcases*}
	\end{equation*}
	If no WC is repeated, all the sections are updated after $L_1L_2$ WCs are processed. We define $\epsilon^\textnormal{WC}_{\delta,\mathcal{W}}$ as the BP threshold such that for $\epsilon \leq \epsilon^\textnormal{WC}_{\delta,\mathcal{W}}$ the non-uniform windowed decoder is able to decode all TVNs to at most a target erasure probability $\delta$. An example of a WC is presented in Fig.~\ref{fig_window}.
	
	\begin{figure}[tb]
		\centering
		\includegraphics[scale=0.4]{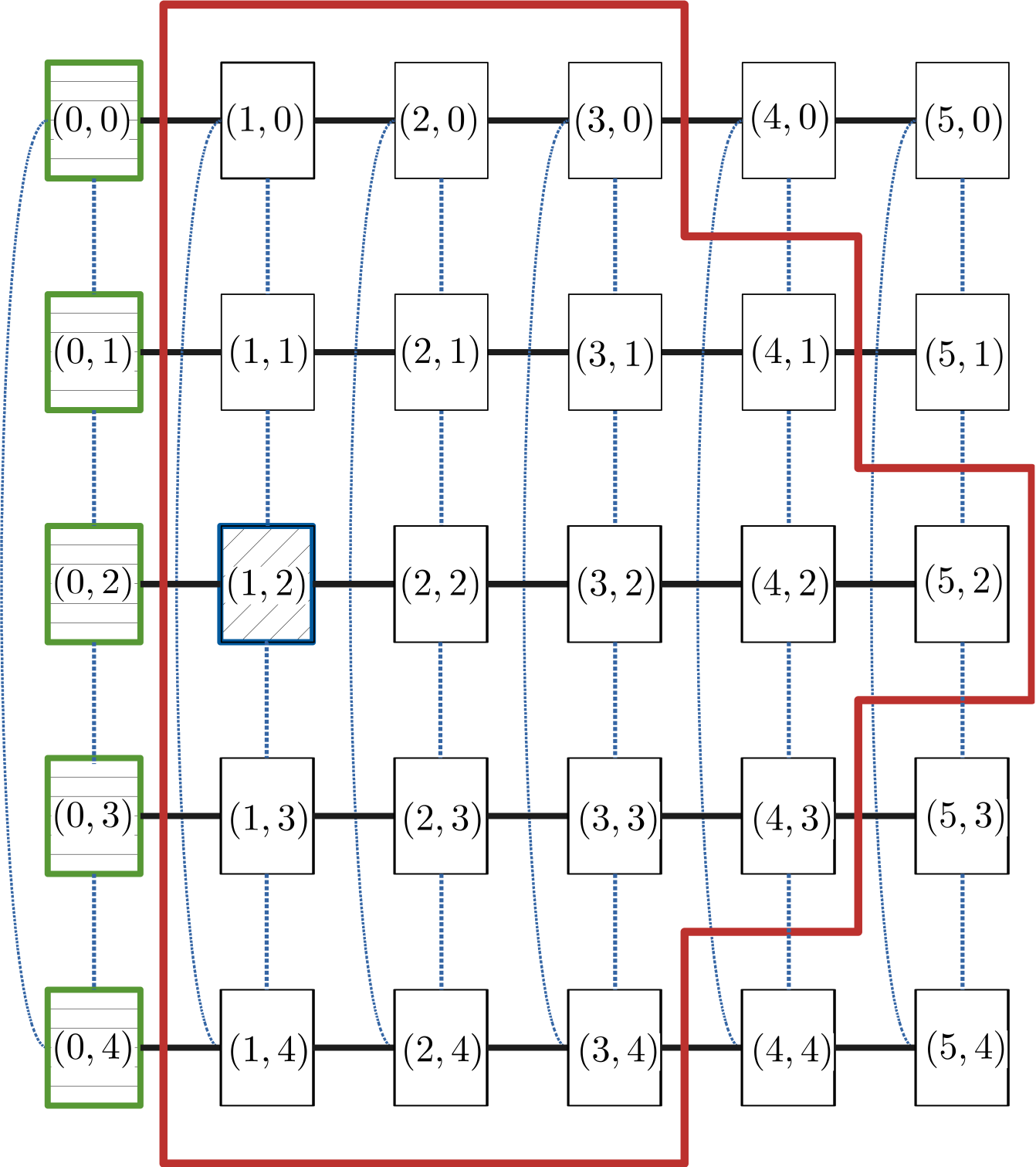}
		\caption{\footnotesize Example of a non-uniform WC with TVNs in section $(1,2)$ and window sizes $[5,4,3,3,4]$. The green rectangles with horizontal hatching represent the decoded VNs and the blue rectangle with diagonal hatching represents the targeted VNs. Due to the TVNs being in segment $2$, the window is shifted so that window size $W_0$ is used for segment $2$.\vspace{-0.3cm}}
		\label{fig_window}
	\end{figure}

	We briefly analyze the complexity and latency of the decoder. The complexity of the $t^{\text{th}}$ WC is $\mathcal{O}(\sum_{i=0}^{L_2-1}W_i I_{(i_t,j_t)})$ since the number of VNs in a WC is $\mathcal{O}(\sum_{i=0}^{L_2-1}W_i)$ and all these VNs are updated for $I_{(i_t,j_t)}$ iterations. For convenience, we denote $s(\mathcal{W}) = \sum_{i=0}^{L_2-1}W_i$. Additionally, the number of VNs that need to be accessed to process a WC is at most $\mathcal{O}(s(\mathcal{W}))$. As such, the latency of a WC is upper bounded by $\mathcal{O}(s(\mathcal{W})) + \mathcal{O}(s(\mathcal{W})I_{(i_t,j_t)}) = \mathcal{O}(s(\mathcal{W})I_{(i_t,j_t)})$. For a block BP decoder, the latency is $\mathcal{O}(L_1L_2I_{BP})$ where $I_{BP}$ represents the number of iterations. Even if the iteration number is the same for both (in general, $I_{(i_t,j_t)} \leq I_{BP}$), latency is reduced by at least a factor of $\mathcal{O}(\frac{s(\mathcal{W})}{L_1L_2})$. As such, we denote the latency or complexity constraint as $s(\mathcal{W}) \leq C$ for some integer $C$. We denote $C$ as the \textit{window complexity}.
	
	We note that an equivalent performance to uniform windowed decoding \cite{IyengarIT2013} can be achieved by setting $ W_i = W_j$ for $i\neq j$. Thus, the latency is $\mathcal{O}(\frac{L_2W}{L_1L_2}) = \mathcal{O}(\frac{W}{L_1})$ and the decoding threshold is the same as for a windowed decoder of SC codes by the same rationale as shown in \Cref{Lemma_eq}. Furthermore, the performance of a uniform windowed decoder is independent of $\gamma_2$ and $\mathcal{T}$. As such, we consider the uniform decoder for baseline performance and will demonstrate how allowing for non-uniform $\mathcal{W}$ results in a finer control of decoder complexity, latency, and reliability. 
	
	Given the general construction, we address three design degrees of freedom in the rest of this section:
	\begin{enumerate}
		\item What order should the WCs be processed in?
		\item What is the best $\mathcal{W}$ given $s(\mathcal{W})$?
		\item What should the number of iterations be set to for each WC?
	\end{enumerate}
	
	It is clear that the performance of the decoder jointly depends on the previous three questions. In the subsequent subsections, we answer these questions and motivate our choices.

	\subsection{Processing Order}
	
	We remind in \Cref{Lemma_eq}, we proved that the thresholds of $\mathcal{C}_{\text{MD}}$ and $\mathcal{C}_{\text{1D}}$ are equivalent because the erasure probabilities for both BP DEs exactly track each other. This implies that the decoding wave \cite{Kudekar2015WaveLikeSO} also appears for $\mathcal{C}_{\text{MD}}$ and that it travels \textcolor{black}{along the first dimension of the code.}
	
	
	This observation implies we should process the WCs along the first dimension to follow the decoding wave. As such, we impose the constraint that no section in $(i,\cdot)$ can be processed before a section in $\{(k,\cdot) : k <i\}$. Therefore, we only need to choose the processing order of the sections $(i,\cdot)$. One intuitively reasonable choice is to process them in the order $0,1,\dots,L_2-1$ which we call the \textit{natural order}. For this processing order, the next TVNs that will be processed are the ones closest to the previously decoded TVNs which help the most in decoding the new TVNs. \textcolor{black}{In the simulations, we demonstrate that ordering has a strong effect on the finite number of iterations.}
	
	\subsection{Design of Window Sizes}
	
	Now that a WC processing order is settled, we identify the best choice of $\mathcal{W}$ for this ordering. In order to choose a $\mathcal{W}$ independent of $L_1$, we analyze a WC whose performance is a lower bound on the performances of all WCs. 
	
\vspace{-0.5cm}
	\textcolor{black}{
		\begin{definition} (Worst-Case WC) Given $\mathcal{W}$, we define $\underline{q}^{(l)}$ to represent the \textit{worst-case window constellation} where $q^{(0)}_{(i,j)} = \delta$  if $i<0$ and $1$ otherwise.
		We then update $\underline{q}^{(l)}$ by \cref{eq_windowed_update} with the TVNs designated in section $(0,0)$.
		We also define the worst-case window BP threshold as
		\begin{equation*}
		\epsilon^{worst}_{\delta,\mathcal{W}} = \sup\{\epsilon > 0 : \; q^{(l)}_{0,0} \to a \text{ as } l \to \infty \; \text{s.t.} \, a\leq \delta\},
		\end{equation*}
		since every WC aims at decoding the targeted VNs.
		\end{definition}
	}
	\begin{claim}
		For all $\epsilon \leq \epsilon^{worst}_{\delta,\mathcal{W}}$, the non-uniform windowed decoder is able to decode the VNs of all sections to an erasure probability at most the target erasure probability $\delta$.
	\end{claim}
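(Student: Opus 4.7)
The plan is to proceed by induction on the number of processed WCs, coupling each actual window DE to the worst-case window DE via the monotonicity of $f$. I would fix the processing order and let $(i_t,j_t)$ denote the TVN section of the $t$-th WC; the induction hypothesis I would maintain is that, after $t$ WCs have been processed, $x_{(i,j),\{t\}}\leq\delta$ for every section that has already served as a TVN.

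For the inductive step on the $(t+1)$-th WC, I would shift coordinates so that $(i_{t+1},j_{t+1})$ maps to $(0,0)$ and establish the key intermediate domination $\underline{z}^{(0)}=\underline{x}_{\{t\}}\preceq\underline{q}^{(0)}$ componentwise. Splitting by shifted first index: sections with shifted first index $<0$ lie strictly behind the current TVN in the processing order (by the constraint stated in the processing-order subsection), so by the induction hypothesis they either take value $\leq\delta$ or sit outside $[(L_1,L_2)]$ and hence equal $0\leq\delta$; sections with shifted first index $\geq 0$ are trivially dominated by $q^{(0)}_{(i,j)}=1$.

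Next I would propagate this domination through the window iterations. Since \cref{eq_windowed_update} freezes sections outside the WC and applies the same monotonic $f$ inside, a short induction on the iteration count $l$ gives $\underline{z}^{(l)}\preceq\underline{q}^{(l)}$ for all $l\geq 0$. Because $\epsilon\leq\epsilon^{worst}_{\delta,\mathcal{W}}$, we have $q^{(l)}_{(0,0)}\to a\leq\delta$, so the actual TVN value drops below $\delta$ after some finite $I_{(i_{t+1},j_{t+1})}$, and the global update step then records $x_{(i_{t+1},j_{t+1}),\{t+1\}}\leq\delta$, completing the induction. The base case $t=0$ is identical, using that $\underline{x}_{\{0\}}\in\{0,1\}^{L_1L_2}$ trivially satisfies the domination after the shift. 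After all $L_1L_2$ WCs every section has been a TVN, yielding the claim.

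The principal technical hurdle is the componentwise comparison $\underline{z}^{(0)}\preceq\underline{q}^{(0)}$ at the start of each WC, because it hinges on the processing-order constraint that no section at position $i$ is decoded before every section at positions $<i$; once this comparison is in place, the monotone coupling through the iterations and the final convergence to $\leq\delta$ follow in a routine manner.
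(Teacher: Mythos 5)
Your proof is correct and follows essentially the same route as the paper's (informal) justification: the paper likewise invokes the processing-order constraint to argue that previously decoded sections sit at erasure probability at most $\delta$, so that the worst-case initialization dominates the actual window constellation, and then propagates this domination through the window iterations by monotonicity of $f(\cdot)$. Your version simply makes explicit the double induction (over processed WCs and over iterations within a WC) that the paper leaves implicit.
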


	To understand this claim, recall that no section in $(i,\cdot)$ can be processed before a section in $\{(k,\cdot) : k <i\}$. For a section in $(i,\cdot)$, we can assume the erasure probabilities of VNs in sections $\{(k,\cdot) : k <i\}$ are at most $\delta$. Therefore, the first TVNs processed have the least help from the other sections since they have yet to be decoded. By monotonicity of $f(\cdot)$, the Worst-Case WC DEs dominate the DEs for any WC in the code. Hence, if the Worst-Case WC decodes its TVNs to an erasure probability at most $\delta$, then so do all the WCs.
	
	We intend to find a $\mathcal{W}$ that satisfies $s(\mathcal{W}) \leq C$ and maximizes $\epsilon^{worst}_{\delta,\mathcal{W}}$. Intuitively, one may think a $\mathcal{W}$ which satisfies $s(\mathcal{W}) < C$ must perform worse than those that satisfy $s(\mathcal{W}) = C$. However, this is not always true. The following lemma provides an ordering to the performances of different choices of $\mathcal{W}$.
	\begin{lemma}
		Given distinct $\mathcal{W}$ and $\mathcal{W}^{\prime}$ where $\mathcal{W}^{\prime} 	\preceq \mathcal{W} $, the following inequality holds for the worst-case window thresholds
		\begin{equation*}
		\epsilon^{worst}_{\delta,\mathcal{W}^{\prime}} \leq \epsilon^{worst}_{\delta,\mathcal{W}}.
		\end{equation*}
	\end{lemma}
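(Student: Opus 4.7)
The plan is to fix $\epsilon$ and show, by induction on the iteration index $l$, that the worst-case constellations satisfy $q^{(l)}_{(i,j),\mathcal{W}} \leq q^{(l)}_{(i,j),\mathcal{W}'}$ for every section $(i,j)$ simultaneously. Since both sequences use the same initialization, any $\epsilon$ for which the $\mathcal{W}'$ sequence drives $q^{(l)}_{(0,0)}$ to a limit at most $\delta$ will automatically do the same for $\mathcal{W}$, so the set of admissible $\epsilon$ for $\mathcal{W}'$ is contained in that for $\mathcal{W}$, yielding the desired inequality on the suprema.

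The two ingredients driving the induction are: (i) the componentwise containment $S^{\mathcal{W}'}_{(0,0)} \subseteq S^{\mathcal{W}}_{(0,0)}$, which follows directly from $\mathcal{W}' \preceq \mathcal{W}$ and the definition of $S^{\mathcal{W}}_{(i_t,j_t)}$; and (ii) the monotonicity of $f(\cdot)$ in each of its arguments, which was already observed after \cref{eq_bp_forward_de}. The base case $l=0$ is trivial since both constellations share the same initialization ($\delta$ for $i<0$, and $1$ elsewhere). For the inductive step, I would split into three cases based on where $(i,j)$ sits relative to the two windows. If $(i,j) \notin S^{\mathcal{W}}_{(0,0)}$, then by containment $(i,j) \notin S^{\mathcal{W}'}_{(0,0)}$ either, so both updates copy the previous value and the inductive hypothesis carries over. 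If $(i,j) \in S^{\mathcal{W}'}_{(0,0)}$, then $(i,j)$ lies in both windows, both sides apply $f$ to neighborhoods related by the inductive hypothesis, and monotonicity of $f$ closes the step.

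The only delicate case is $(i,j) \in S^{\mathcal{W}}_{(0,0)} \setminus S^{\mathcal{W}'}_{(0,0)}$, where the $\mathcal{W}$ decoder applies $f$ but the $\mathcal{W}'$ decoder freezes the value. Here $q^{(l+1)}_{(i,j),\mathcal{W}'} = q^{(l)}_{(i,j),\mathcal{W}'}$, and because $(i,j) \notin S^{\mathcal{W}'}_{(0,0)}$ with $i \geq 0$ (since $(i,j) \in S^{\mathcal{W}}_{(0,0)}$), the frozen value remains equal to its initialization $1$. Since erasure probabilities are always at most $1$, the inequality $q^{(l+1)}_{(i,j),\mathcal{W}} \leq 1 = q^{(l+1)}_{(i,j),\mathcal{W}'}$ holds trivially. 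This is the step I expect to be the main subtlety because it is the one place where the two dynamics do not align; the resolution relies purely on the trivial bound on erasure probabilities rather than on any monotonicity argument.

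Finally, taking $l \to \infty$ and using the fact that both sequences are monotone (a standard property of BP DE on the BEC), the limit of $q^{(l)}_{(0,0),\mathcal{W}}$ is at most the limit of $q^{(l)}_{(0,0),\mathcal{W}'}$. Therefore, any $\epsilon$ lying in the supremum defining $\epsilon^{worst}_{\delta,\mathcal{W}'}$ also lies in the set defining $\epsilon^{worst}_{\delta,\mathcal{W}}$, which gives $\epsilon^{worst}_{\delta,\mathcal{W}'} \leq \epsilon^{worst}_{\delta,\mathcal{W}}$.
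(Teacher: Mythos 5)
Your proposal is correct and follows essentially the same approach as the paper's proof: induction on the iteration index using the containment $S^{\mathcal{W}'} \subseteq S^{\mathcal{W}}$ and the monotonicity of $f(\cdot)$, with the sections in $S^{\mathcal{W}} \setminus S^{\mathcal{W}'}$ handled by the trivial bound that the frozen value stays at $1$. Your case analysis is in fact slightly more explicit than the paper's, which only spells out the first iteration before invoking induction.
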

	\begin{proof}
		Let $\underline{z}^{(l)}_{\mathcal{W}}$ and $\underline{z}^{(l)}_{\mathcal{W}^\prime}$ be the worst-case window constellations for WCs with $\mathcal{W}$ and $\mathcal{W}^\prime$ at iteration $l$, respectively. Recall that $S^{\mathcal{W}}$ and $S^{\mathcal{W}^\prime}$ are the WCs over which BP will be performed on. We note that $S^{\mathcal{W}^\prime} \subset S^{\mathcal{W}}$. By definition, $\underline{z}^{(0)}_{\mathcal{W}} = \underline{z}^{(0)}_{\mathcal{W}^\prime}$. Consider a section $(i,j)$ such that $(i,j)\in S^{\mathcal{W}} \setminus S^{\mathcal{W}^\prime}$. As such, 
		$
		z^{(1)}_{(i,j),\mathcal{W}} = \epsilon \leq 1 = z^{(1)}_{(i,j),\mathcal{W}^{\prime}}.
		$
		Now, consider a section $(i,j)$ such that $(i,j)\in S^{\mathcal{W}} \cap S^{\mathcal{W}^\prime}$. We have $z^{(1)}_{(i,j),\mathcal{W}} = z^{(1)}_{(i,j),\mathcal{W}^\prime} = \epsilon$. Hence, we  conclude $\underline{z}^{(1)}_{\mathcal{W}} \preceq \underline{z}^{(1)}_{\mathcal{W}^\prime}$.
		By induction on $l$ and monotonicity of $f(\cdot)$, we conclude $\underline{z}^{(\infty)}_{\mathcal{W}} \preceq \underline{z}^{(\infty)}_{\mathcal{W}^\prime}$ which completes the proof.
	\end{proof}
	
	Thus, for every $\mathcal{W}^{\prime}$ that satisfies $s(\mathcal{W}^{\prime}) < C$, there exists a $\mathcal{W}$ such that $s(\mathcal{W}) = C$ and has a better threshold than $\mathcal{W}^{\prime}$. Hence, we can restrict our attention to all choices of $\mathcal{W}$ that satisfy $s(\mathcal{W}) = C$.

	\subsection{Iteration per Window}
	With the processing order and $\mathcal{W}$ fixed, we calculate the minimum $I_{(i_t,j_t)}$ for each section to guarantee the target erasure probability $\delta$ is met. In the next section, we provide simulations on how the iteration number changes as function of the ensemble parameters.

	\section{Simulations}
	In this section, we demonstrate through simulations the flexibility and improvements offered by our non-uniform windowed decoder. Additionally, we empirically justify  our design choices, e.g., the use of worst-case WC and the processing order, and show that such design choices result in a superior performance.
	
	\subsection{Worst-Case WC Analysis for Decoder Design}

	\begin{table}[t]
		\centering
		\caption{\footnotesize Thresholds for Window Sizes with the largest $\epsilon^{worst}_{\delta,\mathcal{W}}$ for $d_l = 4$, $d_r = 8$, $\gamma_1=2$, and $\delta= 10^{-12}$. Window sizes are constrained between $2$ and $7$, for $L_2=7$, and between $2$ and $5$, for $L_2=9$.}
		\setlength\tabcolsep{4.7pt}
		\begin{tabular}{c|c|c|c|c|c|c}
			$L_2$&$\gamma_2$& $\mathcal{T}$ & $C$ & $\mathcal{W}$ & $\epsilon^{worst}_{\delta,\mathcal{W}}$ & $\epsilon^\textnormal{WC}_{\delta,\mathcal{W}}$  \\
			\hline
			
			$7$ & $2$ & $0.05$ & $28$ & $(5, 5, 4, 2, 3, 4, 5)$& $\approx 0.4829$ &$\approx 0.4829$\\	
			
			$7$ &$2$ & $0.1$ &$28$ & $(5, 5, 4, 3, 3, 4, 4)$& $\approx 0.4722$ & $\approx 0.4722$\\
			
			$7$ &$3$ & $0.05$ & $28$ & $(5, 4, 4, 3, 4, 4, 4)$& $\approx 0.4723$ & $\approx 0.4723$\\	
			
			$7$ &$3$ & $0.1$ &$28$ & $(4, 4, 4, 4, 4, 4, 4)$& $\approx 0.4685$ & $\approx 0.4685$\\

			$9$ & $2$ & $0.05$ & $36$ & $(5, 5, 4, 3, 2, 3, 4, 5, 5)$& $\approx 0.4872$ &$\approx 0.4872$\\	
			
			$9$ &$2$ & $0.1$ &$36$ & $(5, 5, 4, 3, 2, 3, 4, 5, 5)$& $\approx 0.4806$ & $\approx 0.4806$\\

			$9$ &$3$ & $0.05$ & $36$ & $(5, 5, 5, 2, 3, 3, 4, 4, 5)$& $\approx 0.4767$ & $\approx 0.4767$\\	
			
			$9$ &$3$ & $0.1$ &$36$ & $(4, 4, 4, 4, 4, 4, 4, 4, 4)$& $\approx 0.4685$ & $\approx 0.4685$\\
			\hline
		\end{tabular}
		\label{Table_worst_window}
		\vspace{-0.3cm}
	\end{table}

	To justify the use of Worst-Case WC to design the window sizes, we compare  $\epsilon^\textnormal{WC}_{\delta,\mathcal{W}}$ and $\epsilon^{worst}_{\delta,\mathcal{W}}$ for several cases. In \Cref{Table_worst_window}, we show the thresholds for different window sizes that were chosen to maximize $\epsilon^{worst}_{\delta,\mathcal{W}}$ for various code parameters. We note that the uniform windowed decoder has the same threshold regardless of $\gamma_2$ and  $\mathcal{T}$. From the table, we see that $\epsilon^{worst}_{\delta,\mathcal{W}}$ and $\epsilon^\textnormal{WC}_{\delta,\mathcal{W}}$ are equal in the first four digits which indicates that $\epsilon^{worst}_{\delta,\mathcal{W}}$ is a good measure of the performance for our decoder. Therefore, for the same decoder complexity, our decoder is able to operate for $\epsilon$ for where the uniform windowed decoder would fail. Additionally, for the smallest $\gamma_2$ and $\mathcal{T}$, $\mathcal{W}$ has the highest threshold. Therefore, by exploiting the structure of an MD-SC code, the non-uniform windowed decoder is able to get closer to the performance of an optimal code for these simulations. We note that as $\gamma_2$ and $\mathcal{T}$ get smaller, the optimized window sizes tend to become larger around the TVNs. Intuitively, this is a result of the edges concentrating more closely to the sections near the TVNs. We can extrapolate that if $\gamma_2/L_2$ becomes very small, then many window sizes can be set to zero due to the large edge distance between the TVNs and the rest of the sections. Such an approach would not be possible with uniform windowed decoding.
	
	\subsection{Average Iteration Count}
	
	For this experiment, we show how the average number of iterations for all WCs, i.e., $\sum_{(i,j)\in[(L_1,L_2)]}\frac{I_{(i,j)}}{L_1L_2}$,   changes as a function of the window complexity, where $s(\mathcal{W})=C$. In Fig.~\ref{fig_average_iter}, we see the result of this experiment for various window complexities. Its worthwhile to mention that only $C=36$ and $C=45$ are achievable by a uniform windowed decoder and that all other points on the plot are only achievable by non-uniform windowed decoder. We observe that for small $\epsilon$, the average number of iterations is the same for all choices for $\mathcal{W}$. This is expected as $\epsilon$ is far from the BP thresholds of these WCs which is known to result in fast convergence. But as $\epsilon$ gets closer to the BP thresholds of each WC, the average number of iterations starts to split for the different complexities. We observe that for $\epsilon \approx 0.48$, the best WCs for $C=42$ and $C=45$ have about a $35\%$ reduction in average number of iterations compared to $C=36$. However, the non-uniform decoder achieves this improvement with only  an increase of $6$ sections in the window complexity where a uniform decoder requires at least an increase of $9$ sections.
	
	\begin{figure}[t]
		\centering
		\includegraphics[scale=0.54]{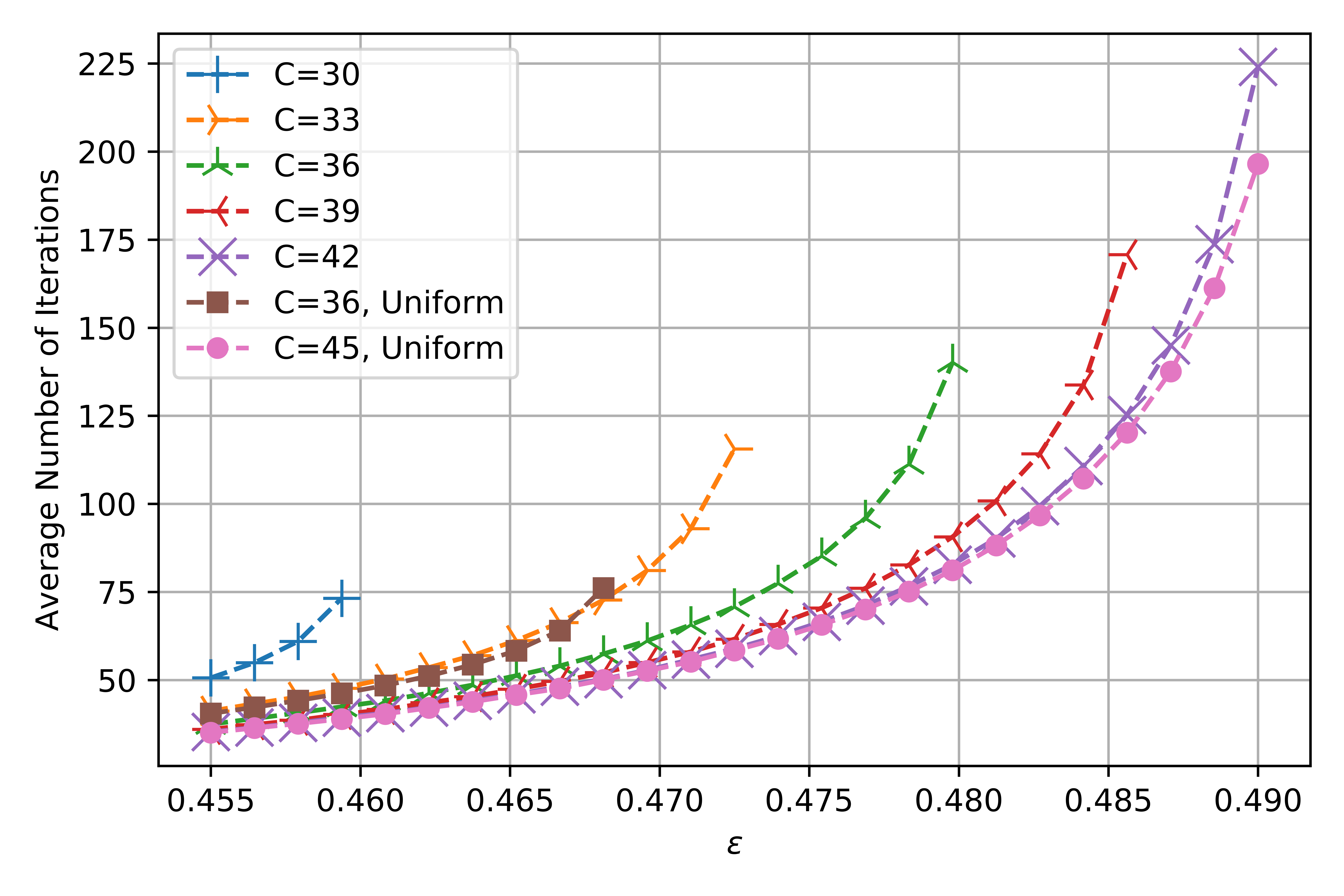}
		\vspace{-0.4cm}
		\caption{\footnotesize Average number of iterations across all WCs for various $\epsilon$. For each window complexity $C$, the best $\mathcal{W}$ was chosen based on worst-case BP thresholds with the window sizes constrained to $[2,5]$. The uniform WCs are presented for comparison. Each $\mathcal{W}$ was only evaluated for $\epsilon$ below its threshold. Code parameters are $d_l = 4$, $d_r = 8$, $\gamma_1=\gamma_2=2$, $L_1=30$, $L_2 = 9$, and $\delta= 10^{-12}$.\vspace{-0.3cm}}
		\label{fig_average_iter}
	\end{figure}
	
	\begin{figure}[t]
		\centering
		\includegraphics[scale=0.54]{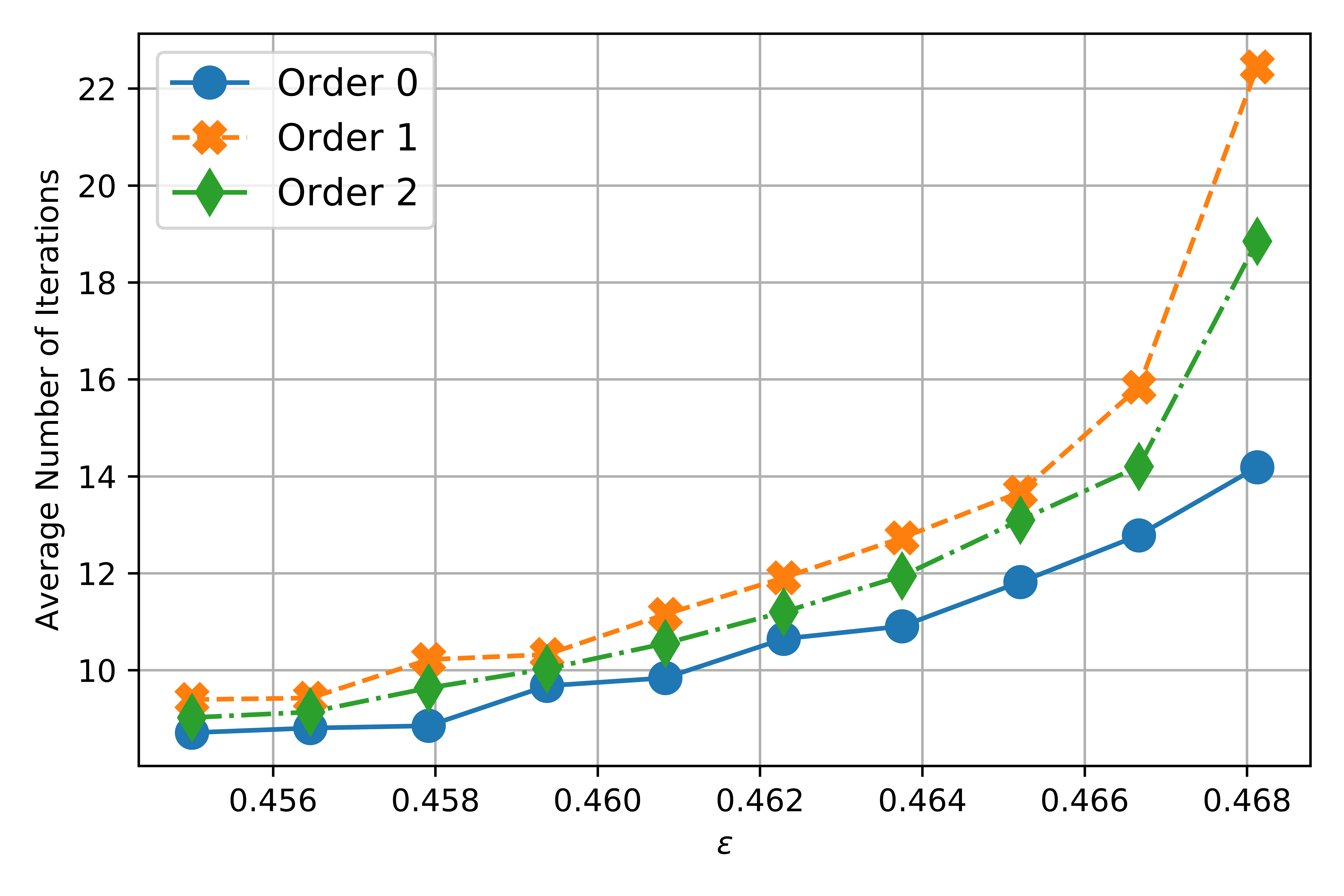}		\vspace{-0.4cm}
		\caption{\footnotesize Average number of iterations across all WCs for various $\epsilon$ and orderings.  Code parameters are $d_l = 4$, $d_r = 8$, $\gamma_1=\gamma_2=2$, $L_1=30$, $L_2 = 19$, $\delta= 10^{-12}$, and $\mathcal{W} = [5, 4, 3, 0,0,0,0,0,0,0,0,0,0,0,0,0,0, 3, 4]$.\vspace{-0.3cm}}
		\label{fig_average_order}
	\end{figure}
	
	\subsection{Effect of Processing Order}
	
	In this section, we quickly remark how the choice of the processing order for WC effects the average number of iterations across all WCs. In Fig.~\ref{fig_average_order}, we compare the average number of iterations for different orderings. Order $0$ is the natural order defined earlier (0 to $L_2-1$), Order $1$ is the reverse natural order ($L_2-1$ to $0$), and Order $2$ is a randomly chosen order. We observe that Order $0$ has the best performance among these orderings and that Order $1$ performs significantly worse than even a random ordering. \textcolor{black}{In this example, the window shape is symmetric so one would expect that going in either order $0$ or order $1$ would provide the same results. We postulate that this difference is partially due to the non-uniform coupling across the second dimension and other causes would require further study. However, the simulation supports our choice of natural ordering.}
	\section{Conclusion}
	
	In this paper, we defined a new variant of MD-SC-LDPC codes which offers more flexibility in designing windowed decoding. We proposed a novel windowed decoder using non-uniform window sizes which better exploits the structure of MD coupling. We demonstrated that, for certain cases, non-uniform windowed decoding can greatly improve the threshold while having the same complexity as uniform windowed decoding which allows our decoder to reliably operate at much higher channel erasure probabilities. Additionally, through simulations, we demonstrate how our decoder allows for a finer control over the latency and reliability trade-off.
	\section{Acknowledgments}
	
	Research supported in part by a grant from ASRC-IDEMA and grant CCF-BSF:CIF $\#1718389$ from NSF.
	\bibliographystyle{ieeetr}
	\bibliography{references}

\end{document}